\documentclass[conference,letterpaper]{IEEEtran}

\usepackage[utf8]{inputenc} 
\usepackage[T1]{fontenc}
\usepackage{url}
\usepackage{ifthen}
\usepackage{cite}
\usepackage[cmex10]{amsmath} 
\usepackage{amsfonts,amssymb}
\usepackage{dsfont}

\usepackage{booktabs}
\usepackage{subcaption}
\usepackage{tikz}
\usepackage{pgfplots}
\usepackage{amsthm}

\newtheorem{claim}{Claim}
\newtheorem{definition}{Definition}
\pgfplotsset{compat=1.18}
\usepgfplotslibrary{groupplots}

\newcommand{\ffour}{\mathbb{F}_{2^2}}

\newcommand{\Sf}{\Sigma_4}
\newcommand{\D}{\mathsf{D}}
\newcommand{\Z}{\mathsf{Z}}
\newcommand{\p}{\text{Pr}}

\newcommand{\rvec}[1]{\boldsymbol{\mathsf{#1}}}
\newcommand{\x}[1]{\boldsymbol{#1}}
\newcommand{\xt}[1]{\tilde{\boldsymbol{#1}}}
\newcommand{\rv}[1]{\mathsf{#1}}

\usepackage[normalem]{ulem}

\allowdisplaybreaks
\IEEEoverridecommandlockouts

\begin{document}


\title{Marker-Delimited Codes for Short-Blocklength, High-Rate Coding over Multi-Read Edit Channels \thanks{This work was supported by the French government through the France 2030 investment plan managed by the National Research Agency (ANR) under reference ANR-22-CPJ2-0054-01.}} 
\author{\IEEEauthorblockN{\textbf{Sinan Ateş Yercan, Marc Antonini, Serge Kas Hanna}}
\IEEEauthorblockA{Côte d’Azur University, CNRS, I3S, Sophia Antipolis, France
\\
Emails: sinan.yercan@i3s.unice.fr, marc.antonini@cnrs.fr, serge.kas-hanna@cnrs.fr}
}

\maketitle

\begin{abstract}
   The read process of DNA-based data storage systems generates multiple noisy copies of the stored DNA sequences, affected by edit errors consisting of substitutions, deletions, and insertions. Motivated by the challenge of ensuring reliable data retrieval in the presence of edit errors, we present a concatenated coding scheme that accounts for practical design constraints in DNA storage. We introduce and apply the marker-delimited code~(MDC) as the inner code, which enables fast and reliable computation of symbolwise a posteriori probabilities~(APPs). We combine MDC with an outer LDPC code. The LDPC is decoded via belief propagation using the soft information generated by MDC. Our results show that, in comparison with prior work, this construction provides more efficient error correction over multi-read edit channels in the short-blocklength and high-rate~regime.    
\end{abstract}

\section{Introduction}
Synthetic deoxyribonucleic acid (DNA) has emerged as a promising medium for high-density and durable data storage, with several proof-of-concept demonstrations showing encouraging results over the last decade~\cite{church2012_setup, goldman2013towards, grass2015_setup, yazdi2017portable, chandak2019improved, press2020hedges, welzel2023dna, bar2025scalable, DNA-MGCP}. Despite this progress, data reliability remains a central challenge, largely due to the inherent errors and biases introduced by biochemical processes during synthesis, amplification, storage, and sequencing~\cite{heckel2019characterization,digitaltwin_errorrates}. Consequently, alongside ongoing improvements to the write (synthesis) and read (sequencing) processes for synthetic DNA molecules, commonly referred to as \emph{oligos}, the design and application of dedicated channel coding techniques have received increasing attention to enhance data reliability~\cite{milenkovic2024dna,sabary2024survey_coding}.

One of the main challenges that arises when using DNA as an alternative storage medium is its distinct error profile compared to conventional systems. In addition to substitution errors, which are common across many applications, DNA synthesis and sequencing introduce insertion and deletion errors~\cite{heckel2019characterization}. The combined presence of insertion, deletion, and substitution errors, collectively termed {\em edit} errors, has led to a growing body of work on adapting existing error-correcting codes (ECCs) to the DNA setting or designing new ones~\cite{cai2021correcting,tang2024correcting,hanna2024GC,mgcplus,press2020hedges,welzel2023dna}. Technical limitations in current synthesis technologies further complicate ECC design by imposing strict length constraints. Current synthesis methods are limited to short DNA oligos, typically a few hundred nucleotides in length, which remain costly to synthesize~\cite{heckel2019characterization}. 
As a result, writing data into DNA dominates the overall cost, motivating the need to keep storage overhead low and thus to minimize the redundancy introduced by ECCs. Taken together, these considerations make the design of {\em short-blocklength} and {\em high-rate} ECCs particularly important for reliable and cost-efficient DNA data storage.

Classical studies of edit-correcting codes predate the interest in DNA-based data storage and were conducted under different design objectives. A key line of work stems from modeling the encoder and channel jointly as a Hidden Markov Model (HMM), first proposed in~\cite{davey_IDS}, where the coding scheme helps restore the loss of synchronization induced by insertions and deletions. Building on this foundation, convolutional codes~\cite{mansour2010convolutional} and marker-based constructions~\cite{ratzer2005marker} have received significant attention as synchronization codes. In practice, such synchronization codes are typically used as inner codes and combined with a reliable substitution-correcting outer code, such as an LDPC code, within a {\em concatenated} coding scheme.

In the context of DNA storage in particular, an additional and fundamental consideration is decoding from multiple reads. The read process (sequencing) in DNA storage produces several noisy copies of the synthesized oligos, and leveraging this multi-read structure can substantially improve decoding performance compared to using a single read. This problem is often referred to as the trace reconstruction problem, first introduced in~\cite{batu2004reconstructing}, and has been the focus of extensive theoretical study~\cite{cheraghchi2020_codedtrec,chase2021approximate,davies2021approximate}. From a coding-theoretic perspective, several works have explored practical multi-read decoding by extending the HMM framework to jointly process multiple traces~\cite{conv_concat,symbolwiseMAP,banerjee2024sequential,banerjee2025sequential}. Other interdisciplinary approaches combine channel coding with multiple sequence alignment (MSA) techniques~\cite{TBMA}. While MSA algorithms are also commonly used on their own in practice, their effectiveness is limited to correcting sequencing errors, whereas coding-based methods can additionally correct synthesis errors that appear systematically across reads~\cite{press2020hedges}. Alternatively, multiple reads can first be aligned to form a consensus sequence, after which an edit-correcting code, such as the recently introduced MGC+ code~\cite{DNA-MGCP}, can be used to correct the residual edit errors. However, multi-read coding schemes can jointly exploit the information contained across all reads for improved decoding performance.

In this work, we present a concatenated coding scheme based on an inner code that we introduce, called the marker-delimited code (MDC). We describe the MDC construction and detail its decoding algorithm. The MDC decoder exploits marker sequences to divide the decoding problem over multiple full reads into smaller independent subproblems over shorter read fragments. These subproblems are solved separately and in parallel, leading to faster decoding overall. Moreover, since each subproblem involves only short fragments, the decomposition makes likelihood-based decoding computationally tractable. We realize this idea by introducing a soft-output decoder that operates on a segment-by-segment basis and uses the likelihoods to compute symbolwise a posteriori probabilities (APPs). The inner MDC code is paired with an outer LDPC code, which leverages these APPs through belief-propagation decoding. In comparison with existing approaches, we show that our coding scheme achieves both faster and more reliable decoding in the short-blocklength, high-rate regime of interest.


\section{Preliminaries}
\subsection{Notation}
We denote vectors with bold, random variables with sans-serif, and sets with calligraphic letters, e.g. $\x{x}$, $\rv{X}$, and $\rvec{X}$ denote a vector, random variable, and a random vector, respectively. For integers $i,j$, and $n$, we define \mbox{$[n] \triangleq \{1,2,\dots,n\}$}, and \mbox{$[i,j] \triangleq \{i,i+1,\dots,j\}$} for $i\leq j$ and $[i,j] \triangleq \{\}$ otherwise. For a vector $\x{x}$, the notation $\x{x}_{[i,j]}$ denotes the subvector $(x_i,x_{i+1},\dots,x_j)$. We use $\x{x}^{[t]} \triangleq (\x{x}^{1},\dots,\x{x}^{t})$ to denote a collection of vectors, where each $\x{x}^{i}$ may have arbitrary length. 
The notation $\x{x}||\x{y}$ denotes vector concatenation. We use $\mathbb{F}_q$ to denote the field of size~$q$, and $\Sigma_q = \{0,1,\dots,q-1\}$ for the alphabet of size $q$. 

\subsection{Multi-Read Edit Channel}\label{sec:ch_mod}
We consider a channel that takes as input a codeword $\boldsymbol{x}\in\Sigma_4^n$ and outputs $t$ reads (noisy copies) of it, denoted by $\x{y}^{[t]} = (\x{y}^1,\ldots,\x{y}^t)$. Each read $\x{y}^i\in\Sigma_4^*$ is generated independently by passing $\boldsymbol{x}$ through a random edit channel that operates sequentially on the input symbols. At each position, the channel applies one of four events with probabilities $p_i, p_d, p_s$, and $p_r$: insertion, deletion, substitution, or retention. For an input symbol $x\in \Sigma_4$, the observed channel output is $\x{y} = (\sigma,x)$ for an insertion event with $\sigma$ chosen uniformly at random from~$\Sf$, $y = \bar{x}$ for a substitution event with $\bar{x}$ chosen uniformly at random from~$\Sf\setminus \{x\}$, and $y=x$ for the retention event. In case of a deletion, the input symbol is removed and does not appear in the channel output. In this channel model, consecutive insertions are limited to length one, and inserted symbols are not edited again. This corresponds to a simplified version of the IDS channel~\cite{davey_IDS}. This simplification aligns with experimentally validated error statistics in DNA storage, which show that insertions are rare under standard pipelines~\cite{digitaltwin_errorrates}.

We use the concept of \textit{drift}~\cite{davey_IDS} to quantify the impact of the insertion-deletion errors. For channel input and output sequences $\boldsymbol{x},\boldsymbol{y}$, the drift vector $\x{d} \in \mathbb{Z}^{n+1}$ is the measure of the synchronization loss between $\boldsymbol{x}$ and $\boldsymbol{y}$. Specifically, we take $d_i$ as the cumulative difference between the total number of insertion and deletion events after the transmission of the \mbox{$i$-th} input symbol~$x_i$. The drift vector $\boldsymbol{d}=(d_0,d_1,\ldots,d_n)$ thus forms a Markov chain. For $i = 0,1,\ldots,n-1$, the transition probabilities are given by
\begin{equation}\label{eqn:drift_transition_prob}
    \p(d_{i+1}|d_i) =  \begin{cases}
        p_i &,\; d_{i+1} = d_i + 1 \\[-3pt]
        p_r + p_s &,\; d_{i+1} = d_i \\[-3pt]
        p_d &,\; d_{i+1} = d_i - 1  
    \end{cases},
\end{equation} where $d_0 = 0$ with probability one and $\lvert d_{i+1} - d_i\rvert \leq 1$. 
\section{Coding Scheme}
We propose a concatenated coding scheme composed of an inner code used for generating soft information, and an outer error-correcting code that utilizes this soft information to decode the message. As an inner code, we introduce the Marker-Delimited Code (MDC), which produces symbolwise APPs over $\Sigma_4$ under the multi-read edit channel. As shown later in Section~\ref{simul}, at high rates, MDC offers both faster decoding and improved reliability relative to commonly used inner codes for multi-read channels, such as convolutional~\cite{conv_concat,banerjee2024sequential,banerjee2025sequential} and marker-repeat (MR) codes~\cite{TBMA}. For the outer code, we consider LDPC codes and adapt their soft decoding procedures to the quaternary setting to match the APPs produced by MDC. 
\subsection{Marker-Delimited Codes (MDC)}
We now introduce the inner MDC code. Informally, an MDC code partitions the information message into $\nu$ blocks, encodes each block
independently using a short code $\mathcal{C}$, and appends a fixed marker sequence to each encoded block. The markers provide a known structure within the codeword that the decoder exploits to generate symbolwise APPs for the input message. The formal definition is given below.
\begin{definition}\label{def}
An \mbox{$[\tilde{n},\tilde{k},\mu,\nu]$-MDC} code is defined by a marker sequence $\x{m}\in\Sigma_4^\mu$, and a code
$\mathcal{C}\subseteq\Sigma_4^{\tilde{n}}$ of size $4^{\tilde{k}}$ with encoding function \mbox{$\mathcal{E}:\Sigma_4^{\tilde{k}}\rightarrow\Sigma_4^{\tilde{n}}$}. The MDC encoder maps an information
message $\x{u}\in\Sigma_4^k$, with $k=\nu\tilde{k}$, into a codeword
$\x{x}\in\Sigma_4^n$ of the form
\[
\x{x} =
\big(
\mathcal{E}(\x{u}_{[1,\tilde{k}]}) \| \x{m} \|\; \ldots \; \|
\mathcal{E}(\x{u}_{[(\nu-1)\tilde{k}+1,k]}) \| \x{m}
\big),
\]
where $B\triangleq \tilde{n}+\mu$ is the \emph{segment length}, $n=\nu B$, and $R = \tilde{k}/B$ is the code rate. The MDC decoder takes as input $t$ reads of $\x{x}$ over the multi-read edit channel and outputs symbolwise APPs for $\x{u}\in\Sigma_4^k$.
\end{definition}

The purpose of the markers is to {\em delimit} the codeword into non-overlapping segments that can be identified and isolated at the decoder. This segmentation allows the full-length multi-read APP computation problem to be divided into $\nu$ smaller subproblems over shorter segments. These subproblems can then be processed independently and in parallel, resulting in faster decoding. Furthermore, the MDC parameters are chosen such that, within each subproblem, the segment length $B$ is sufficiently small so that likelihood-based APP computation becomes tractable. The MDC decoding procedure that exploits this structure is a central component of the proposed construction and is described in detail in Section~\ref{sec:decoding}.

\subsection{Concatenated Architecture}
The concatenated scheme uses an LDPC code as the outer code and MDC as the inner code. During encoding, the information message is first encoded by the LDPC encoder, and the resulting LDPC codeword is then passed to the MDC encoder. The parity-check matrix of the LDPC code is generated using the progressive edge-growth (PEG) method for constructing regular LDPC codes~\cite{peg_theo,peg_implement}. Random nonzero elements from $\ffour$ are then assigned to the nonzero entries of the parity-check matrix, yielding a nonbinary LDPC over $\ffour$.

At the decoder, the MDC decoder processes $t$ channel reads and computes symbolwise APPs for the quaternary symbols of the LDPC codeword. These APPs are then used as soft input to a belief-propagation (BP) decoder, adapted to accommodate arithmetic over the field $\ffour$~\cite{Declercq2007NBLDPC}. 

\section{MDC Soft-Output Decoding}\label{sec:decoding} 
We begin by providing a high-level overview of the MDC decoding process, while detailed technical descriptions of its constituent steps are given in the following subsections. The MDC decoder follows a divide-and-conquer type approach composed of three sequential steps: \emph{segmentation}, \emph{likelihood-based APP computation}, and \emph{confidence-based boundary refinement}. The decoder first uses the static markers to split each of the $t$~reads into individual segments. It then computes symbolwise APPs for each segment by evaluating the likelihoods of a shortlisted set of candidate codewords using the corresponding read fragments. Finally, the decoder refines the segmentation boundaries based on the per-segment confidences obtained from the computed likelihoods and updates the APPs accordingly.

 
The key advantage of this decoding approach is that the segmentation step reduces the original multi-read decoding problem into smaller subproblems over short read fragments. This makes likelihood-based APP computation tractable while still allowing the read fragments associated with each segment to be processed jointly. In contrast, graph-based algorithms such as BCJR require an exponential complexity in the number of reads $t$ when used for exact joint multi-read inference~\cite{conv_concat}.
To avoid this exponential complexity, running the BCJR algorithm separately on each read and subsequently multiplying the resulting APPs has been proposed~\cite{conv_concat, 10161631}, while other state-of-the-art methods, such as Trellis BMA~\cite{TBMA}, approximate joint multi-read BCJR inference using interacting single-read trellises. Both approaches avoid exact joint inference across all reads, which results in degraded APPs.
An additional advantage of the segmentation step is that the resulting subproblems can be processed independently and in parallel, thereby reducing the overall decoding time compared to BCJR-based methods.

\subsection{Segmentation}
Let $\rvec{D} = (\D_0,\D_1,\dots,\D_{n})$ be the random drift vector (defined in Section~\ref{sec:ch_mod}) associated with the transmission of an MDC codeword \mbox{$\x{x} \in \Sigma_4^{n}$} over the edit channel, resulting in a read sequence $\x{y} \in \Sigma_4^{n'}$. Let $\rvec{Z} = (\Z_0,\Z_1,\dots,\Z_{\nu})$ be the offset vector, where $\Z_j = \D_{jB}$, $j\in [\nu]$, corresponds to the net drift at the end of each MDC segment of length~$B$, with $\Z_0=0$. Then, the segmentation task is equivalent to estimating~$\rvec{Z}$. The difference $\rv{Z}_j - \rv{Z}_{j-1}$ is exactly the net offset introduced within segment~$j$. Let $d = z_j - z_{j-1}$, then
\begin{equation}\label{eqn:offset_recusion}
    \p(z_j|z_{j-1}) = \sum_{\delta = -\mu}^{\mu}\p(\Z^{\text{cw}}_{j} = d - \delta)\p(\Z^{\text{mar}}_{j} = \delta),
\end{equation}
describes the probability of the segment offset $z_j$, given the previous segment's offset $z_{j-1}$, where $\Z^\text{cw}_{j}$ is the net offset in the $\tilde{n}$ symbols of the codeword from $\mathcal{C}$, and $\Z^\text{mar}_{j}$ is the net offset in the $\mu$ marker symbols. The probability mass function (PMF) of the offset in the codeword symbols, $\p(\Z^\text{cw}_{j} = d)$, can be expressed as
\begin{align} 
    \p(\Z^\text{cw}_{j} = d) = \sum_{\eta = \max\{0,-d\}}^{\lfloor \frac{\tilde{n}-d}{2}\rfloor}&{\binom{\tilde{n}}{\eta ,d+\eta ,\tilde{n}-2\eta -d}}\nonumber\\[-5pt]
    &\quad \times {p_d}^{\eta }{p_i}^{d+\eta}{(p_r+p_s)}^{\tilde{n}-2\eta-d},
\end{align}
by counting the number of error patterns of length $\tilde{n}$ that yield the desired net offset $d$. To obtain the distribution of the offset in the marker, $\p(\Z^\text{mar}_{j} = \delta)$, we make use of the structure of the marker $\x{m}$, and segment offset $z_j$. Let $$\tilde{\x{m}} \triangleq \x{y}_{[jB+z_j-\delta-\mu+1,\; jB+z_j-\delta]}$$ be the {\em received marker} of length~$\mu$. Then, we have \mbox{$\p(\Z^\text{mar}_{j} = \delta) = \p(\tilde{\x{m}}| \x{m})$}, which can be formulated analytically as in~\cite{markertrellis_decoding,mgcplus}.

For short $\x{m}$ ($\mu \leq 4$), $\p(\tilde{\x{m}}| \x{m})$ can alternatively be precomputed for all possible pairs of $(\tilde{\x{m}},\delta)$ by iterating over all possible error patterns $\x{e}$ that can affect the marker $\x{m}$. Let~$\mathcal{M}_i$ be the multiset of all possible sequences obtained from the first $i$ symbols of $\x{m}$, and $\mathcal{W}(m_{i+1},e_i)$ be the set of all possible channel outputs of the symbol $m_{i+1}$ due to error $e_i$. Then $\mathcal{M}_{i+1} = \{\x{r}||\x{s}:\x{r}\in\mathcal{M}_i,\x{s}\in\mathcal{W}(m_{i+1},e_i)\}$, where we conclude that any sequence $\tilde{\x{m}} \in \mathcal{M}_i$, with multiplicity $m(\tilde{\x{m}})$, has probability $\p(\tilde{\x{m}}|\x{m},\x{e}) = \p(\x{e})$. The error pattern $\x{e}$ directly yields the net marker offset $\delta$, and if the received $\tilde{\x{m}}$ is shorter than $\mu$, we modify the received markers as $\tilde{\x{m}}' = (\tilde{\x{m}}||\x{m}''), \forall \x{m}'' \in \Sf^{\mu - |\tilde{\x{m}}|}$ with probability $\p(\x{e})\times4^{|\tilde{\x{m}}| - \mu}$. If $|\tilde{\x{m}}| \geq \mu$, then we take $\tilde{\x{m}}' = \tilde{\x{m}}_{[\mu]}$, with associated probability $\p(\x{e})$.

Using~\eqref{eqn:offset_recusion}, the offsets $\Z_i$ can be calculated recursively through a Viterbi-like algorithm, similar to~\cite{mgcplus}. We begin by constructing a trellis with $\nu+1$ stages, where stage $j \in [0,\nu]$ corresponds to $\Z_j$, and each stage contains nodes associated with the allowable drift values $z_j \in [-\Phi,\Phi]$. The node at stage $j$ and value $z_j$ is assigned a path-metric: 
\begin{equation}\label{eqn:seg_path_metric}
    \alpha_j(z_j) = \max_{z_{j-1}} \alpha_{j-1}(z_{j-1})\p(z_j|z_{j-1}),
\end{equation}
and a predecessor 
\begin{equation}
    \text{pred}_j(z_j) = \arg\max_{z_{j-1}}\alpha_{j-1}(z_{j-1})\p(z_j|z_{j-1}).
\end{equation}
Both the metrics and the predecessors are calculated recursively from layer 1 to $\nu$, with $\alpha_0(z_0) = \mathds{1}_{\{z_0=0\}}$ and \mbox{$\alpha_\nu(z_\nu) = \mathds{1}_{\{z_\nu = n'-n\}}$}, since $\Z_0 = \D_0 = 0$ and \mbox{$\Z_{\nu} = \D_n = n'-n$}. After each layer is fully calculated, the path-metrics are normalized to ensure numerical stability. Then, the estimates are obtained as \mbox{$\rv{\hat{Z}}_{j} = \text{pred}_{j+1}(z_{j+1})$}.

\subsection{Shortlisted Likelihood-Based APP Computation }\label{sec:shortlistedMAP}

Recall that the MDC construction (Definition~\ref{def}) is defined by a code $\mathcal{C}\subseteq \Sigma_4^{\tilde{n}}$ with encoding function $\mathcal{E}:\Sigma_4^{\tilde{k}}\to\Sigma_4^{\tilde{n}}$, and a marker sequence $\x{m}\in\Sigma_4^\mu$. We now define the extended code $\mathcal{C}'$ obtained by appending the marker sequence $\x{m}$ to each codeword of~$\mathcal{C}$, i.e.,
\[
\mathcal{C}' \triangleq \{ \tilde{\x{x}} = \mathcal{E}(\tilde{\x{u}})\|\x{m} : \tilde{\x{u}}\in\Sigma_4^{\tilde{k}} \} \subseteq \Sigma_4^B,
\]
where \mbox{$\tilde{\x{u}}=(\tilde{u}_1,\dots,\tilde{u}_{\tilde{k}})$} denotes the message substring associated with an MDC segment of length $B=\tilde{n}+\mu$. Let \mbox{$\tilde{\x{y}}^{[t]}=(\tilde{\x{y}}^1,\tilde{\x{y}}^2,\dots,\tilde{\x{y}}^t)$} denote the read fragments corresponding to a given segment. The posterior probability of $\tilde{\x{x}}\in\mathcal{C}'$ given the read fragments is
\begin{align} \label{eq:post}
    \p(\tilde{\x{x}}|\tilde{\x{y}}^{[t]})
    =
    \frac{\p(\tilde{\x{y}}^{[t]}|\tilde{\x{x}})\p(\tilde{\x{x}})}
    {\sum_{\tilde{\x{x}}'\in\mathcal{C}'} \p(\tilde{\x{y}}^{[t]}|\tilde{\x{x}}')\p(\tilde{\x{x}}')}.
\end{align}
Assuming a uniform prior over $\mathcal{C}'$, this reduces to
\begin{equation}
    \p(\tilde{\x{x}}|\tilde{\x{y}}^{[t]})
    =
    \frac{\p(\tilde{\x{y}}^{[t]}|\tilde{\x{x}})}
    {\sum_{\tilde{\x{x}}'\in\mathcal{C}'} \p(\tilde{\x{y}}^{[t]}|\tilde{\x{x}}')}.
\end{equation}
Furthermore, since the $t$ reads are obtained independently for a given channel input $\xt{x}$, we have
\begin{equation}
    \p(\tilde{\x{y}}^{[t]}|\tilde{\x{x}})
    =
    \prod_{\tau=1}^t \p(\tilde{\x{y}}^\tau|\tilde{\x{x}}),
\end{equation}
where each per-read likelihood $\p(\tilde{\x{y}}^\tau|\tilde{\x{x}})$ can be computed in $\mathcal{O}(B^2)$ time using dynamic programming (DP) as in~\cite{markertrellis_decoding}.

If the segment boundaries have been correctly estimated in the previous step and the posterior in \eqref{eq:post} is computed for all $\tilde{\x{x}}\in\mathcal{C}'$, then the \emph{exact} symbolwise APPs of $\tilde{\x{u}}\in\Sigma_4^{\tilde{k}}$ can be obtained via marginalization. Namely, for $i\in[\tilde{k}]$ and~\mbox{$a\in\Sigma_4$}, the symbolwise APPs are given by
\begin{align} \label{eqn:likelihoodcodebook}
    \p(\tilde{u}_i=a|\tilde{\x{y}}^{[t]})
    =
   \sum_{\tilde{\x{x}}\in\mathcal{C}',\,\tilde{u}_i=a}
\p(\tilde{\x{x}}|\tilde{\x{y}}^{[t]}).
\end{align}

\enlargethispage{-5pt}

The exact evaluation of \eqref{eqn:likelihoodcodebook} therefore necessitates computing $\p(\tilde{\x{x}}|\tilde{\x{y}}^{[t]})$ for all $|\mathcal{C}'|=4^{\tilde{k}}$ codewords, each involving a quadratic-time DP computation in the segment length~$B$. To reduce this complexity, the MDC decoder first constructs a shortlist of candidates $\mathcal{L}\subseteq \mathcal{C}'$ using a similarity-based metric computable in linear time. Specifically, for each read fragment $\tilde{\x{y}}^\tau$, a per-read shortlist $\mathcal{L}_\tau\subseteq\mathcal{C}'$ is formed by selecting the candidates $\tilde{\x{x}}$ with the largest $\kappa$-mer-based Jaccard similarity to $\tilde{\x{y}}^\tau$, defined as
\begin{equation}
    J(\tilde{\x{x}},\tilde{\x{y}}^\tau)
    \triangleq
    \frac{|\mathcal{K}(\tilde{\x{x}})\cap\mathcal{K}(\tilde{\x{y}}^\tau)|}
    {|\mathcal{K}(\tilde{\x{x}})\cup\mathcal{K}(\tilde{\x{y}}^\tau)|},
\end{equation}
where $\mathcal{K}(\tilde{\x{x}})=\{\tilde{\x{x}}_{[i,i+\kappa-1]}: i\in[B-\kappa+1]\}$ denotes the set of $\kappa$-mers of $\tilde{\x{x}}$. The final shortlist is obtained as \mbox{$\mathcal{L}=\cup_{\tau=1}^t \mathcal{L}_\tau$}. The APPs are then \emph{approximated} by restricting the computations in~\eqref{eq:post} and \eqref{eqn:likelihoodcodebook} to $\mathcal{L}$ instead of $\mathcal{C}'$. 

Therefore, instead of applying the quadratic-time DP likelihood computation to all $|\mathcal{C}'| = 4^{\tilde{k}}$ candidates, the decoder first performs a linear-time Jaccard similarity evaluation over~$\mathcal{C}'$ to construct a shortlist $\mathcal{L}$, and then applies the quadratic-time posterior computation only to the $L\triangleq |\mathcal{L}|$ shortlisted candidates. Since Jaccard similarity over $\kappa$-mers approximates sequence similarity under edit errors~\cite{ukkonen1992approximate}, the shortlist favors candidates with small edit distances and large posterior probabilities. The computed APPs therefore remain accurate provided that the shortlist captures most of the posterior mass, i.e., $\sum_{\tilde{\x{x}}\in\mathcal{L}} \p(\tilde{\x{x}}|\tilde{\x{y}}^{[t]})$ is close to~$1$.

\subsection{Confidence-Based Boundary Refinement}
The accuracy of the APPs obtained in the previous step also depends strongly on the accuracy of the segmentation step. Therefore, after the initial APP computation, the decoder refines the segmentation boundaries. For segment $j\in [\nu]$, we define the confidence score $F_j$ as
\begin{equation}
 F_j \triangleq \frac{\max_{\xt{x}: \xt{x} \in \mathcal{L}}\p(\xt{y}^{[t]}|\xt{x})}{\sum_{\xt{x} \in \mathcal{L}}\p(\xt{y}^{[t]}|\xt{x})},
\end{equation}
with $\xt{x}^*$ being the maximizing argument. Then, for segments satisfying $F_j \geq F_T$, where $F_T \in [0,1]$ is a confidence threshold, the corresponding probabilities in \eqref{eqn:offset_recusion} are updated to incorporate the APP computation from the previous step as follows:
\begin{multline}
    \p^{\text{u}}(z_j|z_{j-1}) =
    \sum_{\delta = -\mu}^{\mu}\p(\xt{y}_{[(j-1)B + z_{j-1},jB + z_j - \delta]}\mid \xt{x}^*)\\[-0.75em] \times\p(\Z^{\text{mar}}_{j} = \delta).
\end{multline}
The segmentation and APP computation steps are then repeated using the updated transition probabilities, while retaining the original shortlisted candidates. Finally, if the confidence of a segment increases after refinement, the decoder updates the corresponding APPs; otherwise, it retains the APPs obtained from the initial segmentation.

\begin{figure*}[t]
    \hspace{-10pt}
    \subfloat[SER, $R=0.7$]{\label{fig:1a}
        \begin{tikzpicture}
        \begin{axis}[
                width=0.34\textwidth,  
                height=0.3\linewidth, 
                ymode=log,
                grid=both,
                major grid style={gray!50},
                minor grid style={gray!30},
                ylabel style = {yshift = -4pt},
                ytick={1e-5,1e-4,1e-3,1e-2,1e-1,1},
                xtick = {1,2,3,4,5,6},
                ymin=1e-4,
                ymax=4e-1,
                xmin=1,
                xmax=6,
                tick label style={font=\fontsize{7}{8}\selectfont},
                label style = {font = \fontsize{9}{10}\selectfont},
                xlabel = {Number of Reads $t$},
                ylabel = {Symbol Error Rate (SER)},
                legend style = {font = \fontsize{6}{7}\selectfont,fill opacity = 0.8,
                        draw opacity = 0.8, text opacity = 1, anchor = north east, at = {(0.99,0.99)}
                        ,cells = {anchor = west}, nodes = {inner sep = 1pt}}
        ]
        \addplot[red, thick, mark=diamond, mark options={fill=white}] coordinates {
            (1,0.1556)(2,0.0501)(3,0.01626)(4,0.0051)(5,0.00186)(6,0.00066)
        };
       \addlegendentry{CC-TBMA}
        
        \addplot[teal, thick,mark=diamond,mark options = {fill=white}] coordinates{
            (1,0.11791071428571429)(2,0.04903571428571428)(3,0.016339285714285712)(4,0.006366071428571427)
                        (5,0.0026875)(6,0.001044642857142857)
        };
        \addlegendentry{MR-TBMA}

        
        \addplot[blue, thick,mark = diamond, mark options = {fill = white}] coordinates{
            (1,0.165)(2,0.0581)(3,0.012)(4,0.00225)(5,0.000555)
                            (6,0.000153)
        };
        \addlegendentry{MDC (this work)}
        
        \end{axis}
        \end{tikzpicture}
    
    }
    \subfloat[SER, $R=0.8$]{\label{fig:1b}
        \begin{tikzpicture}
        \begin{axis}[
                width=0.34\textwidth,  
                height=0.3\linewidth, 
                ymode=log,
                grid=both,
                major grid style={gray!50},
                minor grid style={gray!30},
                ytick={1e-5,1e-4,1e-3,1e-2,1e-1,1},
                xtick = {1,2,3,4,5,6,7,8},
                ymin=1e-4,
                ymax=4e-1,
                xmin=1,
                ylabel style = {yshift = -4pt},
                xmax=8,
                tick label style={font=\fontsize{7}{8}\selectfont},
                label style = {font = \fontsize{9}{10}\selectfont},
                 ylabel = {Symbol Error Rate (SER)},
                xlabel = {Number of Reads $t$},
                legend style = {font = \fontsize{6}{7}\selectfont,fill opacity = 0.8,
                        draw opacity = 0.8, text opacity = 1, anchor = north east, at = {(0.99,0.99)}
                        ,cells = {anchor = west}, nodes = {inner sep = 1pt}}
        ]
        \addplot[red,thick,mark=o,mark options = {fill=white}]coordinates{
            (1,0.229578125)(2,0.1028984375)(3,0.04884375)(4,0.0194453125)(5,0.0078671875)(6,0.003796875)(7,0.00165625)(8,0.0008515625)
        };
        \addlegendentry{CC-TBMA}
                
        \addplot[teal,thick,mark=o,mark options = {fill=white}]coordinates{
            (1,0.150515625)(2,0.0748671875)(3,0.030203125)(4,0.0138359375)(5,0.006875)(6,0.003640625)(7,0.0020234375)(8,0.0012421875)        };
        \addlegendentry{MR-TBMA}

        \addplot[blue,thick,mark=o,mark options = {fill = white}]coordinates{
            (1,0.161597)(2, 0.091888)(3,0.026882)(4,0.00879140625)(5,0.00276796875)(6,0.0008625)(7,0.00026875)(8,0.0001125)
        };
        \addlegendentry{MDC (this work)}
        \end{axis}
        \end{tikzpicture}
    }
    \subfloat[Decoding Time, $R=0.8$]{\label{fig:1c}
\begin{tikzpicture}
\begin{axis}[
width=0.34\textwidth,  
                height=0.3\linewidth,
    xlabel={Number of Reads $t$},
    ylabel={Average Decoding Time (s)},
    xmin=1, xmax=8,
    ylabel style = {yshift = -4pt},
    ymin=0, ymax=1.4,
    xtick={1,2,3,4,5,6,7,8},
    ytick={0, 0.2, 0.4, 0.6, 0.8, 1.0, 1.2, 1.4},
    legend style = {font = \fontsize{6}{7}\selectfont,fill opacity = 0.8,
                        draw opacity = 0.8, text opacity = 1, anchor = north west, at = {(0.01,0.99)}
                        ,cells = {anchor = west}, nodes = {inner sep = 1pt}},
    grid=both,
                    major grid style={gray!50},
                minor grid style={gray!30},
                    tick label style={font=\fontsize{7}{8}\selectfont},
                label style = {font = \fontsize{9}{10}\selectfont},
]


\addplot[
    color=teal,
    thick,
    mark=o,
    mark options = {fill=white}
] coordinates {
    (1, 0.0772)
    (2, 0.1489)
    (3, 0.2200)
    (4, 0.2937)
    (5, 0.3675)
    (6, 0.4341)
    (7, 0.5030)
    (8, 0.5736)
};
\addlegendentry{MR-TBMA}


\addplot[
    color=red,
    thick,
    mark=o,
    mark options = {fill=white}
] coordinates {
    (1, 0.1627)
    (2, 0.3188)
    (3, 0.4757)
    (4, 0.6357)
    (5, 0.7870)
    (6, 0.9369)
    (7, 1.0982)
    (8, 1.2558)
};
\addlegendentry{CC-TBMA}

\addplot[
    color=blue,
    thick,
    mark=o,
    mark options = {fill=white, solid}
] coordinates {
    (1, 0.069801)
    (2, 0.107822)
    (3, 0.145873)
    (4, 0.183724)
    (5, 0.218002)
    (6, 0.243673)
    (7, 0.272989)
    (8, 0.301161)
};
\addlegendentry{MDC (serial)}

\addplot[
    color=blue,
    thick,
    mark=o,
    mark options = {fill=white, solid},
    dashed
] coordinates {
    (1, 0.017972)
    (2, 0.021038)
    (3, 0.024144)
    (4, 0.028026)
    (5, 0.030904)
    (6, 0.032048)
    (7, 0.032439)
    (8, 0.034046)
};
\addlegendentry{MDC (parallel)}
\end{axis}
\end{tikzpicture}
}
        \caption{Symbol error rate after applying hard-decision decoding to the APPs obtained from different quaternary inner codes of rate $R\in \{0.7, 0.8\}$ and common blocklength $n=160$, along with the corresponding average decoding times. The channel edit error rate is fixed to $p_e = 10\%$. The parallel decoding times reported in (c) correspond to execution on 16 CPU cores. All results are averaged over $10^5$ independent trials.}\label{fig:compare_inner}
\end{figure*}

\subsection{Computational Complexity}
Next, we analyze the complexity of the sequential decoding steps. The overall complexity of the MDC decoding algorithm is stated in Claim~\ref{cl:SLL_comp} and follows from summing the complexities of its constituent components.

\begin{claim}\label{cl:SLL_comp}
For an \mbox{$[\tilde{n},\tilde{k},\mu,\nu]$-MDC} code decoded from $t$ reads, the total computational complexity is \mbox{$\mathcal{O}(t\nu\Phi^2 + t\nu B4^{\tilde{k}} + t\nu B^2L)$}, where $\Phi$ is the offset limit, $B$ is the segment length, and $L$ is the shortlist size. With parallelization across reads in the segmentation step and across segments in the APP computation step, the parallel time complexity is $\mathcal{O}(\nu\Phi^2 + tB4^{\tilde{k}} + tB^2L)$.
\end{claim}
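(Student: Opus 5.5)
The plan is to bound each of the three decoding steps separately, sum the bounds to get the serial complexity, and then redo the accounting with the stated parallelization. I treat $\mu\le 4$ as a constant, and I take the marker-offset table $\p(\tilde{\x{m}}\mid\x{m})$ as precomputed, so one lookup costs $\mathcal{O}(1)$.

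\textbf{Segmentation.} For each read I run the Viterbi-like recursion \eqref{eqn:seg_path_metric} on a trellis with $\nu+1$ stages. Each stage has $2\Phi+1$ nodes, and each node maximizes over at most $2\Phi+1$ predecessors. A single transition weight $\p(z_j\mid z_{j-1})$ from \eqref{eqn:offset_recusion} is a sum of $2\mu+1$ terms. Each term is a product of the codeword-offset PMF, which depends only on $d$ and can be tabulated once in time independent of $j$ and $t$, and a marker lookup. So each transition costs $\mathcal{O}(1)$, and one read costs $\mathcal{O}(\nu\Phi^2)$. Normalization and backtracking add only $\mathcal{O}(\nu\Phi)$. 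Over $t$ reads this gives $\mathcal{O}(t\nu\Phi^2)$.

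\textbf{APP computation.} For each segment and each read fragment, I compute the $\kappa$-mer set of the fragment in $\mathcal{O}(B)$ time using hashing. I then evaluate $J(\tilde{\x{x}},\tilde{\x{y}}^\tau)$ against all $4^{\tilde{k}}$ candidates in $\mathcal{C}'$; with precomputed candidate $\kappa$-mer sets each evaluation is $\mathcal{O}(B)$. Selecting the top candidates is dominated by this cost. This gives $\mathcal{O}(tB4^{\tilde{k}})$ per segment. Next, for the union shortlist of size $L$, I compute the $t$ per-read likelihoods by dynamic programming at $\mathcal{O}(B^2)$ each, for $\mathcal{O}(tB^2L)$ per segment. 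The normalization and marginalization in \eqref{eqn:likelihoodcodebook} cost $\mathcal{O}(L\tilde{k})$, which is absorbed into the previous term. Multiplying by $\nu$ segments yields $\mathcal{O}(t\nu B4^{\tilde{k}}+t\nu B^2L)$.

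\textbf{Boundary refinement.} This step reruns segmentation, with the updated transitions $\p^{\text{u}}$, and reruns the APP step on the retained shortlist, each a constant number of times. Hence it does not change the order of the bound, provided the $\p^{\text{u}}$ terms are charged correctly. Each such term requires a DP likelihood of a fragment against $\xt{x}^*$. I expect this charging to be the main obstacle. My approach is to observe that a single DP table of size $\mathcal{O}(B\cdot(B+\Phi))$ per read and segment, anchored at the start position $(j-1)B+z_{j-1}$, gives all end positions at once. Since $\Phi$ is at most of order $B$ in the intended regime, this cost is $\mathcal{O}(B^2)$ per read and segment and is absorbed into $t\nu B^2L$ (note $L\ge 1$). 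However, that table is anchored at one start offset, and up to $2\Phi+1$ start offsets $z_{j-1}$ are possible. I will need either to reuse the table across neighbouring start offsets or to justify that refinement is applied only around the estimated offsets, so that the cost does not multiply by $\Phi$. Summing the three contributions gives the serial bound.

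\textbf{Parallel bound.} Distributing the segmentation over reads removes the factor $t$ from $t\nu\Phi^2$, leaving $\mathcal{O}(\nu\Phi^2)$. Distributing the APP computation over the $\nu$ segments removes the factor $\nu$ from the remaining two terms, leaving $\mathcal{O}(tB4^{\tilde{k}}+tB^2L)$. Adding these gives the stated parallel time complexity.
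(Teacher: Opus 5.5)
Your accounting of segmentation, shortlisting, the posterior DP, and the parallelization matches the paper's proof step for step. You are also more explicit than the paper about taking $\mu$ constant and tabulating the codeword-offset PMF and the marker table, which is what its $\mathcal{O}(\nu\Phi^2)$ count presumes.

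You diverge at the refinement step, and there your proposal stops short of a proof. You correctly note that each $\p^{\text{u}}(z_j|z_{j-1})$ needs a DP likelihood against $\xt{x}^*$, but you leave open how to avoid a factor of $\Phi$. The paper does not supply this argument either. Its proof says refinement ``executes essentially the same segmentation and APP computation steps, but without shortlisting'' and charges it $\mathcal{O}(t\nu\Phi^2+t\nu B^2L)$. That is, it prices each updated transition like the original $\mathcal{O}(1)$ lookup.

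Your suspicion is well founded. One forward DP anchored at $(j-1)B+z_{j-1}$ gives all end positions $jB+z_j-\delta$ at once. Since each input symbol yields at most two output symbols, that table has $\mathcal{O}(B^2)$ cells whatever $\Phi$ is, so you do not need $\Phi=\mathcal{O}(B)$ there. But there are $2\Phi+1$ anchors. The refined transitions therefore cost $\mathcal{O}(\Phi B^2)$ per read and refined segment, i.e.\ $\mathcal{O}(t\nu\Phi B^2)$ overall, and the three terms of the claim do not dominate this in general.

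Neither remedy you list closes this as stated. The natural way to share work across anchors is linearity of the forward recursion in its initial condition: seed one DP with weight $\alpha_{j-1}(z_{j-1})$ at every anchor. That returns $\sum_{z_{j-1}}\alpha_{j-1}(z_{j-1})\,\p^{\text{u}}(z_j|z_{j-1})$, not the individual values needed for the $\max$ and $\text{pred}_j$ in \eqref{eqn:seg_path_metric}. It would only help if you replaced the Viterbi-like recursion by a sum-product one. Restricting the refined trellis to a constant-width window around the first-pass estimates cannot be derived from the algorithm as described; it is a modification you would have to impose.

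To finish, do one of two things. First, you can add an explicit hypothesis such as $\Phi=\mathcal{O}(L)$, which holds in the paper's experiments ($\Phi=25$, $L\ge|\mathcal{L}_\tau|=500$). Then $t\nu\Phi B^2=\mathcal{O}(t\nu B^2L)$. Parallelizing these DPs across segments (not across reads), $t\Phi B^2=\mathcal{O}(tB^2L)$ also fits the parallel bound. Second, you can state openly that you adopt the paper's convention of charging the refined segmentation at the cost of the original one. In that case your argument is complete and coincides with the paper's.
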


\begin{proof}
For the segmentation step, the main computational cost comes from evaluating \eqref{eqn:offset_recusion} for each pair $(z_{j-1},z_j)$ and each $j \in [\nu]$. Since the algorithm restricts each offset to the range $z_j \in [-\Phi,\Phi]$, there are at most $(2\Phi)^2$ offset pairs to consider at each segment boundary. Therefore, the complexity of segmentation is $\mathcal{O}(\nu\Phi^2)$ for a single read and $\mathcal{O}(t\nu\Phi^2)$ for $t$ reads. Since the reads can be processed independently, this step is parallelizable across reads, reducing the effective parallel time complexity to $\mathcal{O}(\nu\Phi^2)$.

The APP computation step consists of two parts: shortlisting based on Jaccard similarity over the full codebook~$\mathcal{C}'$, and posterior-probability calculation over the shortlisted candidates $\mathcal{L}\subseteq \mathcal{C}'$. Computing the Jaccard similarity between two vectors of length $B$ requires identifying the corresponding $\kappa$-mers and computing the intersection and union of the resulting $\kappa$-mer sets, all of which can be done in $\mathcal{O}(B)$ time for fixed $\kappa$. Since this computation is performed for each of the $|\mathcal{C}'|=4^{\tilde{k}}$ codewords, $\nu$ segments, and $t$ reads, the shortlisting complexity is $\mathcal{O}(t\nu B 4^{\tilde{k}})$. Because segments can be processed independently, the effective parallel time complexity of 
shortlisting reduces to $\mathcal{O}(tB4^{\tilde{k}})$. The DP-based probability computation incurs a complexity of $\mathcal{O}(B^2)$~\cite{markertrellis_decoding} per (read, segment, shortlisted codeword) tuple. Hence, for $t$ reads, $\nu$ segments, and a shortlist of size $|\mathcal{L}|=L$, the total computational complexity of the posterior probability calculation is $\mathcal{O}(t\nu B^2 L)$. Similar to shortlisting, the corresponding parallel time complexity reduces to $\mathcal{O}(tB^2L)$.

The confidence-based boundary refinement executes essentially the same segmentation and APP computation steps, but without shortlisting. Its total computational complexity is therefore $\mathcal{O}(t\nu\Phi^2 + t\nu B^2L)$, and its parallel time complexity is $\mathcal{O}(\nu\Phi^2 + tB^2L)$. These terms are already present in the complexity of the initial segmentation and posterior-probability calculation, and hence the refinement step does not change the overall complexity order.
\end{proof}
While linear in the number of reads $t$ and segments $\nu$, the decoding complexity scales exponentially with $\tilde{k}$ through the $4^{\tilde{k}}$ term. Nevertheless, $\tilde{k}$ can be chosen sufficiently small, e.g., $\tilde{k}<10$, to keep the computation practical, at the expense of increasing the number of segments~$\nu$ and thus the redundancy introduced by the markers.
\section{Simulation Results} \label{simul}

We evaluate the performance of our MDC-based coding scheme at two levels: the inner code alone and the overall concatenated code. At the level of the inner code, we assess performance through the quaternary symbol error rate (SER), defined as the normalized Hamming distance between the channel input and the output obtained by applying hard-decision decoding to the APPs produced by the inner decoder. At the concatenated-code level, we evaluate performance through the frame error rate~(FER) obtained after applying the BP decoder of the outer LDPC code while using the APPs generated by the inner MDC code. We report $R$ as the symbol rate in sym/nt. Since each quaternary symbol carries two bits, the corresponding information rate in bits/nt is twice the reported value.


To model the error characteristics observed in DNA storage, we fix the total error rate to $p_e=10\%$ and set the edit proportions to \mbox{$(p_s,p_d,p_i) = (0.527,0.447,0.026)p_e$}. These proportions match empirical measurements reported in real experiments~\cite{digitaltwin_errorrates}, and a total edit error rate of $10\%$ corresponds to a high-error regime in practice, consistent with the use of low-fidelity DNA synthesis or sequencing technologies such as photolithographic synthesis~\cite{antkowiak2020_setup} or nanopore sequencing~\cite{zhao2024composite}.

We first evaluate the performance of the inner MDC code in comparison with the convolutional code (CC) used in~\cite{conv_concat} and with marker-repeat (MR) codes as described in~\cite{TBMA}. A common codeword length of $n=160$ is used for all codes, and the performance is evaluated at two code rates, \mbox{$R=0.7$} and $R=0.8$. The CCs are obtained by puncturing the rate-$0.5$ code presented in~\cite{conv_concat}. The MR codes are obtained by repeating symbols once with uniformly spaced periods to match the target rates. For both CC and MR codes, the APP computation is performed using the TBMA algorithm~\cite{TBMA}, with the maximum drift set to $25$ and lookahead enabled.

For $R=0.7$, an $[8,7,2,16]$-MDC code (see Definition~\ref{def}) is constructed using the marker sequence $\x{m}=(0,2)$ and the dual of the length-$8$ repetition code as $\mathcal{C}$. For $R=0.8$, an $[8,8,2,16]$-MDC code is constructed using the same marker and by leaving the message substrings uncoded. The MDC decoding parameters are set as follows for both cases: offset limit $\Phi=25$, per-read shortlist size $|\mathcal{L}_{\tau}|=500$, $\kappa$-mer length $\kappa = 3$, and confidence threshold $F_T=0.985$.

The resulting SERs and average decoding times for CC, MR, and MDC are shown in Fig.~\ref{fig:compare_inner} as functions of the number of reads. The results in Figs.~\ref{fig:1a} and~\ref{fig:1b} show that, for both considered code rates, MDC achieves a lower SER than both CC and MR for $t \geq 3$. Notably, the improvement in SER becomes more pronounced as the number of reads increases, indicating that MDC is able to exploit additional reads more effectively. In addition to the improvement in reliability, the MDC decoder also exhibits faster decoding. In particular, for $R = 0.8$, Fig.~\ref{fig:1c} shows that MDC consistently achieves faster decoding than CC and MR. The decoding time gains are particularly significant when the structure of the MDC code is leveraged and the decoding is parallelized, attaining average decoding times as low as $34$~ms per codeword of length $n=160$ for $t=8$ reads. Overall, these results demonstrate that MDC achieves simultaneous gains in both reliability and decoding time over commonly used inner encoding and decoding schemes for multi-read edit channels.

In the concatenated setting, we consider CC and MDC as inner codes, each paired with an outer LDPC code decoded via BP with 100 iterations. The CC-LDPC combination follows the construction proposed in~\cite{conv_concat}, but uses the TBMA algorithm~\cite{TBMA} for inner-code decoding, which yields more accurate APPs. The resulting FER curves as functions of the number of reads are shown in Fig.~\ref{fig:concat_FER}, and the corresponding concatenated code parameters are summarized in Table~\ref{tab:parameters}. At an overall rate of~$0.5$, the concatenated scheme with MDC as the inner code achieves FER performance comparable to that of the construction with CC. However, at higher rates, namely $0.6$ and~$0.7$, which are more representative of cost-efficient operating regimes for practical DNA storage systems, the MDC-based constructions provide significant gains in FER. These results therefore indicate that the benefits observed at the inner-code level are effectively mirrored in the concatenated setting, particularly at higher rates.

In conclusion, the results show that MDC-based constructions are well suited for multi-read edit channels in the short-blocklength, high-rate regime of interest for DNA storage. The proposed schemes achieve strong reliability at medium-to-high overall rates while maintaining practical decoding complexity, making them effective for efficient use of DNA synthesis and sequencing resources. 

\begin{figure}[!ht]
    \centering
        \begin{tikzpicture}[scale=0.95]
        \begin{axis}[
            width = 0.87\columnwidth,
            height = 0.69\columnwidth,
            xlabel = {Number of Reads $t$},
            ylabel = {Frame Error Rate (FER)},
            scale only axis,
        enlarge x limits=false,
        enlarge y limits=false,
            ymode = log,
            ytick = {1e-5,1e-4,1e-3,1e-2,1e-1,1},
            xtick = {1,2,3,4,5,6},
            xmin = 1, xmax = 6,
            ymin = 1e-5, ymax = 1,
            grid=both,
            major grid style={gray!50},
            minor grid style={gray!30},
            ylabel style = {yshift = -3pt},
            legend columns = 1,
            legend style = { nodes={inner sep=2pt}, 
                             font = \fontsize{6}{7}\selectfont,anchor = north east,
                             at = {(0.99,0.99)},draw opacity = 0.8, fill = white, fill opacity = 0.8,
                             text opacity = 1, cells = {anchor=west}},
            label style = { font = \fontsize{9}{10}\selectfont},
            tick label style={font=\fontsize{9}{10}\selectfont}
        ]
    
        \addplot[
        red,
        thick,
        mark=o,
        mark options={solid, fill=none},
        ]
        coordinates {
            (1,1)(2,0.856)(3,0.299)(4,0.06)(5,0.011)(6,0.0041)
        };
        \addlegendentry{CC-$1$}

        \addplot[
        blue,
        thick,
        mark=o,
        mark options={solid, fill=none},
        ]
        coordinates {
            (1,0.9998)(2,0.7616)(3,0.1034)(4,0.0109)(5,0.0019)(6,0.0005)
        };
        \addlegendentry{MDC-$1$}

        \addplot[
        red,
        thick,
        mark=square,
        mark options={solid, fill=none},
        ]
        coordinates {
            (1,0.99)(2,0.388)(3,0.06)(4,0.0057)(5,0.0013)
        };
        \addlegendentry{CC-$2$}

        \addplot[
        blue,
        thick,
        mark=square,
        mark options={solid, fill=none},
        ]
        coordinates {
            (1,0.985)(2,0.195)(3,0.009)(4,0.0006)(5,0.00011999999999999999)
        };
        \addlegendentry{MDC-$2$}

        \addplot[
        red,
        thick,
        mark=diamond,
        mark options={solid, fill=none},
        ]
        coordinates {
            (1,0.72)(2,0.036)(3,0.0005)(4,3e-5)
        };
        \addlegendentry{CC-$3$}

        \addplot[
        blue,
        thick,
        mark=diamond,
        mark options={solid, fill=none},
        ]
        coordinates {
            (1,0.808)(2,2.8e-02)(3,3e-4)(4,2e-05)
        };
        \addlegendentry{MDC-$3$}
        \end{axis}
    \end{tikzpicture}
    \caption{Frame error rate (FER) as a function of the number of reads $t$, for different concatenated constructions. The channel edit error rate is fixed to $p_e = 10\%$. The parameters of the concatenated schemes are listed in Table~\ref{tab:parameters}. All results are averaged over $10^5$ independent trials. }
    \label{fig:concat_FER}
\end{figure}

\begin{table}[!ht]
\caption{List of parameters for the concatenated coding schemes.}\label{tab:parameters}
\centering
{\setlength{\tabcolsep}{4pt}
\begin{tabular}{ccccc}
    \toprule
    Scheme & Outer-Code-$[n_o,k_o]$ & Inner-Code & Total $[n,k]$ & Rate $R$ \\
    \midrule
    CC-$1$ & LDPC-$[128,112]$ & CC, $R_{\text{in}} = 0.8$ & $[160,112]$ & $0.7$  \\
    MDC-$1$ & LDPC-$[128,112]$ & MDC-$[8,8,2,16]$ & $[160,112]$ & $0.7$\\
    CC-$2$ & LDPC-$[128,96]$ & CC, $R_{\text{in}} = 0.8$ & $[160,96]$ & $0.6$\\
    MDC-$2$ & LDPC-$[128,96]$ & MDC-$[8,8,2,16]$ & $[160,96]$ & $0.6$ \\
    CC-$3$ & LDPC-$[112,80]$ & CC, $R_{\text{in}} = 0.7$ & $[160,80]$ & $0.5$\\
    MDC-$3$ & LDPC-$[128,80]$ & MDC-$[8,8,2,16]$ & $[160,80]$ & $0.5$ \\
    \bottomrule 
\end{tabular}}
\end{table}

\bibliographystyle{IEEEtran}
\bibliography{references}

@inproceedings{mgcplus,
  title={Marker Guess \& Check Plus ({MGC+}): An Efficient Short Blocklength Code for Random Edit Errors},
  author={Khabbaz, Ramy and Antonini, Marc and {Kas Hanna}, Serge},
  booktitle={2025 13th International Symposium on Topics in Coding (ISTC)},
  pages={1--5},
  year={2025},
  organization={IEEE}
}

@inproceedings{TBMA,
  title={Trellis {BMA}: Coded trace reconstruction on {IDS} channels for {DNA} storage},
  author={Srinivasavaradhan, Sundara Rajan and Gopi, Sivakanth and Pfister, Henry D and Yekhanin, Sergey},
  booktitle={2021 IEEE International Symposium on Information Theory (ISIT)},
  pages={2453--2458},
  year={2021},
  organization={IEEE}
}

@article{conv_concat,
  title={Concatenated codes for multiple reads of a {DNA} sequence},
  author={Maarouf, Issam and Lenz, Andreas and Welter, Lorenz and Wachter-Zeh, Antonia and Rosnes, Eirik and i Amat, Alexandre Graell},
  journal={IEEE Transactions on Information Theory},
  volume={69},
  number={2},
  pages={910--927},
  year={2022},
  publisher={IEEE}
}

@inproceedings{symbolwiseMAP,
  title={Symbolwise {MAP} estimation for multiple-trace insertion/deletion/substitution channels},
  author={Sakogawa, Ryo and Kaneko, Haruhiko},
  booktitle={2020 IEEE International Symposium on Information Theory (ISIT)},
  pages={781--785},
  year={2020},
  organization={IEEE}
}

@inproceedings{peg_theo,
  title={Progressive edge-growth Tanner graphs},
  author={Hu, Xiao-Yu and Eleftheriou, Evangelos and Arnold, D-M},
  booktitle={GLOBECOM'01. IEEE Global Telecommunications Conference (Cat. No. 01CH37270)},
  volume={2},
  pages={995--1001},
  year={2001},
  organization={IEEE}
}

@misc{peg_implement,
  author       = {Roshan Prabhakar and
                  Shubham Chandak and
                  Kedar Tatwawadi},
  title        = {Implementation of Protograph {LDPC} error correction
                   codes},
  month        = sep,
  year         = 2020,
  publisher    = {Zenodo},
  version      = {v1.0},
  doi          = {10.5281/zenodo.4016076},
  url          = {https://doi.org/10.5281/zenodo.4016076}
}

@article{davey_IDS,
  title={Reliable communication over channels with insertions, deletions, and substitutions},
  author={Davey, Matthew C and MacKay, David JC},
  journal={IEEE Transactions on Information Theory},
  volume={47},
  number={2},
  pages={687--698},
  year={2002},
  publisher={IEEE}
}

@article{digitaltwin_errorrates,
  title={A digital twin for {DNA} data storage based on comprehensive quantification of errors and biases},
  author={Gimpel, Andreas L and Stark, Wendelin J and Heckel, Reinhard and Grass, Robert N},
  journal={Nature Communications},
  volume={14},
  number={1},
  pages={6026},
  year={2023},
  publisher={Nature Publishing Group UK London}
}

@article{markertrellis_decoding,
  title={Decoding for channels with insertions, deletions, and substitutions with applications to speech recognition},
  author={Bahl, L and Jelinek, Frederick},
  journal={IEEE Transactions on Information Theory},
  volume={21},
  number={4},
  pages={404--411},
  year={2003},
  publisher={IEEE}
}

@inproceedings{ratzer2005marker,
  title={Marker codes for channels with insertions and deletions},
  author={Ratzer, Edward A},
  booktitle={Annales des t{\'e}l{\'e}communications},
  volume={60},
  number={1},
  pages={29--44},
  year={2005},
  organization={Springer}
}

@article{mansour2010convolutional,
  title={Convolutional decoding in the presence of synchronization errors},
  author={Mansour, Mohamed F and Tewfik, Ahmed H},
  journal={IEEE Journal on Selected Areas in Communications},
  volume={28},
  number={2},
  pages={218--227},
  year={2010},
  publisher={IEEE}
}

@article{church2012_setup,
  title={Next-generation digital information storage in {DNA}},
  author={Church, George M and Gao, Yuan and Kosuri, Sriram},
  journal={Science},
  volume={337},
  number={6102},
  pages={1628--1628},
  year={2012},
  publisher={American Association for the Advancement of Science}
}

@article{goldman2013towards,
    author = {Goldman, Nick and Bertone, Paul and Chen, Siyuan and Dessimoz, Christophe and LeProust, Emily M and Sipos, Botond and Birney, Ewan},
    journal = {Nature},
    number = {7435},
    pages = {77--80},
    publisher = {Nature Publishing Group UK London},
    title = {Towards practical, high-capacity, low-maintenance information storage in synthesized {DNA}},
    volume = {494},
    year = {2013}
}

@article{grass2015_setup,
  title={Robust chemical preservation of digital information on {DNA} in silica with error-correcting codes},
  author={Grass, Robert N and Heckel, Reinhard and Puddu, Michela and Paunescu, Daniela and Stark, Wendelin J},
  journal={Angewandte Chemie International Edition},
  volume={54},
  number={8},
  pages={2552--2555},
  year={2015},
  publisher={Wiley Online Library}
}

@article{antkowiak2020_setup,
  title={Low cost {DNA} data storage using photolithographic synthesis and advanced information reconstruction and error correction},
  author={Antkowiak, Philipp L and Lietard, Jory and Darestani, Mohammad Zalbagi and Somoza, Mark M and Stark, Wendelin J and Heckel, Reinhard and Grass, Robert N},
  journal={Nature Communications},
  volume={11},
  number={1},
  pages={5345},
  year={2020},
  publisher={Nature Publishing Group UK London}
}

@article{sabary2024survey_coding,
  title={Survey for a decade of coding for {DNA} storage},
  author={Sabary, Omer and Kiah, Han Mao and Siegel, Paul H and Yaakobi, Eitan},
  journal={IEEE Transactions on Molecular, Biological, and Multi-Scale Communications},
  volume={10},
  number={2},
  pages={253--271},
  year={2024},
  publisher={IEEE}
}

@article{press2020hedges,
  title={{HEDGES} error-correcting code for {DNA} storage corrects indels and allows sequence constraints},
  author={Press, William H and Hawkins, John A and Jones Jr, Stephen K and Schaub, Jeffrey M and Finkelstein, Ilya J},
  journal={Proceedings of the National Academy of Sciences},
  volume={117},
  number={31},
  pages={18489--18496},
  year={2020},
  publisher={National Academy of Sciences}
}

@article{DNA-MGCP,
    author = {Khabbaz, Ramy and Mateos, J{\'e}r{\'e}my and Antonini, Marc and {Kas Hanna}, Serge},
    doi = {10.64898/2026.03.11.711016},
    journal = {bioRxiv preprint},
    publisher = {Cold Spring Harbor Laboratory},
    title = {{DNA-MGC+}: A versatile codec for reliable and resource-efficient data storage on synthetic {DNA}},
    year = {2026}
}

@article{bar2025scalable,
  title={Scalable and robust {DNA}-based storage via coding theory and deep learning},
  author={Bar-Lev, Daniella and Orr, Itai and Sabary, Omer and Etzion, Tuvi and Yaakobi, Eitan},
  journal={Nature Machine Intelligence},
  volume={7},
  number={4},
  pages={639--649},
  year={2025},
  publisher={Nature Publishing Group UK London}
}

@article{welzel2023dna,
  title={{DNA}-Aeon provides flexible arithmetic coding for constraint adherence and error correction in {DNA} storage},
  author={Welzel, Marius and Schwarz, Peter Michael and L{\"o}chel, Hannah F and Kabdullayeva, Tolganay and Clemens, Sandra and Becker, Anke and Freisleben, Bernd and Heider, Dominik},
  journal={Nature Communications},
  volume={14},
  number={1},
  pages={628},
  year={2023},
  publisher={Nature Publishing Group UK London}
}

@article{cheraghchi2020_codedtrec,
  title={Coded trace reconstruction},
  author={Cheraghchi, Mahdi and Gabrys, Ryan and Milenkovic, Olgica and Ribeiro, Joao},
  journal={IEEE Transactions on Information Theory},
  volume={66},
  number={10},
  pages={6084--6103},
  year={2020},
  publisher={IEEE}
}

@article{banerjee2024sequential,
  title={Sequential decoding of multiple sequences for synchronization errors},
  author={Banerjee, Anisha and Lenz, Andreas and Wachter-Zeh, Antonia},
  journal={IEEE Transactions on Communications},
  volume={72},
  number={11},
  pages={6660--6676},
  year={2024},
  publisher={IEEE}
}

@inproceedings{banerjee2025sequential,
  title={Sequential Decoding of Multiple Traces Over the Syndrome Trellis for Synchronization Errors},
  author={Banerjee, Anisha and Welter, Lorenz and Amat, Alexandre Graell I and Wachter-Zeh, Antonia and Rosnes, Eirik},
  booktitle={ICASSP 2025-2025 IEEE International Conference on Acoustics, Speech and Signal Processing (ICASSP)},
  pages={1--5},
  year={2025},
  organization={IEEE}
}

@inproceedings{batu2004reconstructing,
  title={Reconstructing strings from random traces},
  author={Batu, Tugkan and Kannan, Sampath and Khanna, Sanjeev and McGregor, Andrew},
  booktitle={SODA},
  volume={4},
  pages={910--918},
  year={2004}
}

@article{chase2021approximate,
  title={Approximate trace reconstruction of random strings from a constant number of traces},
  author={Chase, Zachary and Peres, Yuval},
  journal={arXiv preprint arXiv:2107.06454},
  year={2021}
}

@inproceedings{davies2021approximate,
  title={Approximate trace reconstruction: Algorithms},
  author={Davies, Sami and R{\'a}cz, Mikl{\'o}s Z and Schiffer, Benjamin G and Rashtchian, Cyrus},
  booktitle={2021 IEEE International Symposium on Information Theory (ISIT)},
  pages={2525--2530},
  year={2021},
  organization={IEEE}
}

@article{heckel2019characterization,
  title={A characterization of the {DNA} data storage channel},
  author={Heckel, Reinhard and Mikutis, Gediminas and Grass, Robert N},
  journal={Scientific reports},
  volume={9},
  number={1},
  pages={9663},
  year={2019},
  publisher={Nature Publishing Group UK London}
}

@article{zhao2024composite,
  title={Composite hedges Nanopores codec system for rapid and portable {DNA} data readout with high INDEL-Correction},
  author={Zhao, Xuyang and Li, Junyao and Fan, Qingyuan and Dai, Jing and Long, Yanping and Liu, Ronghui and Zhai, Jixian and Pan, Qing and Li, Yi},
  journal={Nature Communications},
  volume={15},
  number={1},
  pages={9395},
  year={2024},
  publisher={Nature Publishing Group UK London}
}

@article{milenkovic2024dna,
  title={{DNA}-based data storage systems: A review of implementations and code constructions},
  author={Milenkovic, Olgica and Pan, Chao},
  journal={IEEE Transactions on Communications},
  volume={72},
  number={7},
  pages={3803--3828},
  year={2024},
  publisher={IEEE}
}

@INPROCEEDINGS{hanna2024GC,
  author={{Kas Hanna}, Serge},
  booktitle={2024 IEEE International Symposium on Information Theory (ISIT)}, 
  title={Short Systematic Codes for Correcting Random Edit Errors in {DNA} Storage}, 
  year={2024},
  volume={},
  number={},
  pages={663-668},
  doi={10.1109/ISIT57864.2024.10619614},
  note={Extended version available at: \url{https://arxiv.org/abs/2402.01244}}
}

@inproceedings{chandak2019improved,
    author = {Chandak, Shubham and Tatwawadi, Kedar and Lau, Billy and Mardia, Jay and Kubit, Matthew and Neu, Joachim and Griffin, Peter and Wootters, Mary and Weissman, Tsachy and Ji, Hanlee},
    booktitle = {2019 57th Annual Allerton Conference on Communication, Control, and Computing (Allerton)},
    organization = {IEEE},
    pages = {147--156},
    title = {Improved read/write cost tradeoff in {DNA}-based data storage using {LDPC} codes},
    year = {2019}
}

@article{yazdi2017portable,
  title={Portable and error-free {DNA}-based data storage},
  author={Yazdi, SM Hossein Tabatabaei and Gabrys, Ryan and Milenkovic, Olgica},
  journal={Scientific Reports},
  volume={7},
  number={1},
  pages={5011},
  year={2017},
  publisher={Nature Publishing Group UK London}
}

@article{cai2021correcting,
  title={Correcting a single indel/edit for {DNA}-based data storage: Linear-time encoders and order-optimality},
  author={Cai, Kui and Chee, Yeow Meng and Gabrys, Ryan and Kiah, Han Mao and Nguyen, Tuan Thanh},
  journal={IEEE Transactions on Information Theory},
  volume={67},
  number={6},
  pages={3438--3451},
  year={2021},
  publisher={IEEE}
}

@article{tang2024correcting,
  title={Correcting a substring edit error of bounded length},
  author={Tang, Yuanyuan and Motamen, Sarvin and Lou, Hao and Whritenour, Kallie and Wang, Shuche and Gabrys, Ryan and Farnoud, Farzad},
  journal={IEEE Transactions on Communications},
  volume={73},
  number={1},
  pages={12--21},
  year={2024},
  publisher={IEEE}
}

@article{Declercq2007NBLDPC,
  author    = {David Declercq and Marc P. C. Fossorier},
  title     = {Decoding Algorithms for Nonbinary {LDPC} Codes over GF(q)},
  journal   = {IEEE Transactions on Communications},
  volume    = {55},
  number    = {4},
  pages     = {633--643},
  year      = {2007},
  month     = apr,
  doi       = {10.1109/TCOMM.2007.894088},
  issn      = {0090-6778}
}

@article{ukkonen1992approximate,
  title={Approximate string-matching with q-grams and maximal matches},
  author={Ukkonen, Esko},
  journal={Theoretical Computer Science},
  volume={92},
  number={1},
  pages={191--211},
  year={1992},
  publisher={Elsevier}
}

@INPROCEEDINGS{10161631,
  author={Welter, Lorenz and Maarouf, Issam and Lenz, Andreas and Wachter-Zeh, Antonia and Rosnes, Eirik and Amat, Alexandre Graell I},
  booktitle={2023 IEEE Information Theory Workshop (ITW)}, 
  title={Index-Based Concatenated Codes for the Multi-Draw {DNA} Storage Channel}, 
  year={2023},
  volume={},
  number={},
  pages={383-388},
  doi={10.1109/ITW55543.2023.10161631}}

\end{document}